\documentclass[conference,letterpaper]{IEEEtran}
\IEEEoverridecommandlockouts

\usepackage[utf8]{inputenc}
\usepackage[T1]{fontenc}
\usepackage{cite}
\usepackage{url}
\usepackage{graphicx}
\usepackage{amsmath,amssymb}
\usepackage{bbm}
\usepackage{microtype}
\usepackage{glossaries}
\usepackage{bbm}

\newtheorem{proposition}{Proposition}

\newacronym{sbs}{SBS}{Stimulated Brillouin Scattering}
\newacronym{qkd}{QKD}{Quantum Key Distribution}

\begin{document}
\title{Fundamental Limits of Eavesdropper Detection in Optical Fiber via Stimulated Brillouin Scattering
%Information-Theoretic Limits of Eavesdropper Detection in Optical Fibers via Stimulated Brillouin Scattering
\thanks{The research is part of the Munich Quantum Valley, which is supported by the Bavarian state government with funds from the Hightech Agenda Bayern Plus. This work was financed by the DFG via grant NO 1129/2-1 and by the Federal Ministry of Education and Research of Germany in the Q-STARS project, grant number 16KIS2604, as well as via grants 16KISQ093, 16KISQ039 and 16KISQ077. The generous support of the state of Bavaria via the 6GQT project is greatly appreciated.}
}

\author{
\IEEEauthorblockN{Kiran Adhikari, Janis N\"otzel (Member, IEEE)}
\textit{Emmy-Noether Group Theoretical Quantum Systems Design},\\
\textit{Technical University of Munich, Munich, Germany}}
%\IEEEauthorblockA{
%Emmy Noether Group for Theoretical Quantum Systems Design\\
%Technical University of Munich, Germany}
%}

\maketitle

\begin{abstract}
 Recent work investigated the use of \gls{sbs} to measure changes in fiber parameters, thereby enhancing the security of a \gls{qkd} system. In this work, we focus solely on the impact of quantum technology on the task of intrusion-detection. We derive an effective input-output model for the \gls{sbs} interaction, and utilize it to compare three detection methods: First, the established state of the art. Second, a photon-counting based method which will likely be available in the near future and, finally, the ultimate quantum limit. We illustrate the potential benefit from modern quantum technology within two different mathematical frameworks: First by using the quantum error exponent of asymmetric hypothesis testing, and second in the context of parameter-estimation and quantum metrology.
\end{abstract}
\section{Introduction}

%$\mathbbm{i}$
Optical fiber networks are a critical component of modern communication infrastructure and are often assumed to provide strong physical-layer security. However, practical attacks such as evanescent coupling or fiber bending allow an adversary to extract optical power with minimal disturbance, making reliable detection of eavesdropping a fundamental challenge \cite{cca879e457e44e34b704c801c6aaa6e8, Shaneman2004OpticalNS}.  While prior work has studied eavesdropper detection and joint communication–sensing strategies in optical and quantum channels \cite{eavesdropperDetection, munarvallespirJCAS, boche2016, boche2017}, a quantitative understanding of the fundamental limits of detection in realistic fiber systems remains incomplete yet desirable when planning for adoption of modern quantum technology into future networks \cite{integratingQuantumSimulation}.

Stimulated Brillouin scattering (SBS) \cite{Wolff:21, Lu2019DistributedOF}, as depicted in Figure \ref{fig:brillouin}, is widely used for distributed sensing in optical fibers, as its resonance frequency is highly sensitive to local material properties such as strain and temperature. In SBS, a strong optical pump interacts with a counter-propagating Stokes field and an acoustic phonon mode, generating a narrowband Lorentzian gain response. The Brillouin resonance frequency is given by
\[
\frac{\Omega_B}{2\pi}=\frac{2}{\lambda}\,n_{\mathrm{eff}}(p,T)\,v_{\mathrm{ac}}(p,T),
\]
and can be measured by sweeping the pump–probe detuning and observing the resulting amplification. This enables spatially resolved sensing along the fiber and has been extensively exploited in techniques such as Brillouin optical correlation domain analysis (BOCDA) \cite{Lu2019DistributedOF, senaOTDR}.

\begin{figure}[h]
    \centering
    \includegraphics[width=0.9\linewidth]{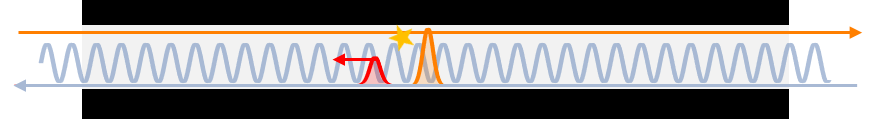}
    \caption{Light blue: CW light, depicting the probe, operating at frequency $\omega_s$. Orange: short, bright pulse, depicting the pump, operating at frequency $\omega_p$. The backscattered stokes photons are at the frequency $\omega_s$ of the probe. }
    \label{fig:brillouin}
\end{figure}

\gls{sbs} techniques were used to demonstrate experimentally that they allow for detecting and localizing eavesdropping in \gls{qkd}  \cite{Gisin_2002, popp2023eavesdropperlocalizationquantumclassical}. However as the focus of these prior works was on realizing detection and localization, no information-theoretic perspective allowing to assess the limits of the involved procedures has been conducted so far. In particular, it remained unclear how to quantify the detectability of weak tapping attacks and what fundamental limits govern such detection.

In this work, we take a first step towards such theoretical analysis and develop a quantum information-theoretic framework for \gls{sbs}-based intrusion detection in optical fibers. We show that, under standard approximations, each fiber segment can be modeled as a single-mode Gaussian quantum channel acting on the optical probe field, and that the full fiber corresponds to a cascade of such channels. This representation enables the application of quantum hypothesis testing tools to analyze the distinguishability between undisturbed and attacked channels.

The main contributions of this paper are as follows. In section \ref{sec:quantum_mechannical}, we derive a quantum-mechanical input–output model of \gls{sbs} and show that each fiber segment implements a Gaussian quantum channel. In section \ref{sec:fiber_cascade}, we model the entire fiber as a cascade of Gaussian channels, capturing both attenuation and SBS-induced noise. Finally, in section \ref{sec:information_theoretic_analysis}, we formulate the eavesdropper detection as an asymmetric quantum hypothesis testing problem and quantify performance using relative entropy and Quantum Fisher Information (QFI). We derive scaling laws for the minimum detectable attack strength and compare optimal and receiver-constrained detection strategies.

\section{Notation}
To shorten equations, we define for every $x\in[0,1]$ the quantity $x':=1-x$. The underlying Hilbert space in our analysis is the Fock space. Notation is borrowed from \cite{bookserfini} and \cite{HolevoBook}, so that $\hat a^\dagger$ and $\hat a$ are creation- and annihilation operators, and $S_N(\alpha):=D(\alpha)S_N(0)D(\alpha)^\dagger$ denotes the displaced thermal state with mean thermal photon number $N\geq0$ and displacement $\alpha\in\mathbb C$.

\section{Quantum Mechanical Picture}
\label{sec:quantum_mechannical}
We consider stimulated Brillouin scattering (SBS) in a one-dimensional traveling-wave geometry as a quantum interaction among three slowly varying bosonic fields: a strong optical pump field $a_p(z,t)$, a counter-propagating Stokes (or probe) optical field $a_s(z,t)$, and an acoustic phonon field $b(z,t)$. 
The interaction Hamiltonian \cite{zhu2024optoacousticentanglementcontinuousbrillouinactive, rakich2016quantumtheorycontinuumoptomechanics} takes the form

\begin{align}
H_{\mathrm{int}} =
\int dz\, \hbar g_0(z)
&(
a_p(z,t)\,a_s^\dagger(z,t)\,b^\dagger(z,t) \\ \nonumber
&+
a_p^\dagger(z,t)\,a_s(z,t)\,b(z,t))
\end{align}

Here, \(g_0(z)\) is an effective opto-acoustic coupling coefficient that may depend on position. The first term describes the annihilation of a pump photon together with the creation of a Stokes photon and a phonon, while the second term describes the reverse conversion process. Together, this ensures that the Hamiltonian is hermitian and that the evolution is unitary.

\subsection{Bi-linear Hamiltonian}

Assuming that the optical pump is sufficiently strong, it is well approximated by a classical coherent field rather than a fully dynamical quantum operator. In this regime, the pump envelope, $a_p(z,t)$, is replaced by its mean amplitude,
\begin{equation}
a_p(z,t) \approx \alpha_p(z,t)e^{-\mathrm{i}\omega_p t},
\end{equation}
with \(|\alpha_p| \gg 1\). Physically, this corresponds to a highly populated pump mode whose quantum fluctuations are negligible compared with its coherent amplitude. The pump, therefore, acts as a classical drive that mediates the interaction between the Stokes optical field and the acoustic phonon field.

Under this approximation, the original three-wave mixing Hamiltonian becomes effectively bilinear in the remaining quantum fields. It is convenient to define the pump-enhanced coupling
\begin{equation}
G(z,t) = g_0(z)\alpha_p(z,t),
\end{equation}
which combines the bare opto-acoustic coupling with the classical pump amplitude. The interaction Hamiltonian then takes the linearized form
\begin{align}
H_{\mathrm{lin}}
=
\int dz\,\hbar
&(
G(z,t)\,a_s^\dagger(z,t)b^\dagger(z,t) \\ \nonumber
&+
G^*(z,t)\,a_s(z,t)b(z,t))
\end{align}
up to rotating-frame conventions.

The term proportional to \(a_s^\dagger b^\dagger\) describes the simultaneous creation of a Stokes photon and an acoustic phonon, driven by the classical pump. Its Hermitian conjugate describes the reverse process, in which a Stokes photon and a phonon are annihilated. The linearized interaction therefore has the form of a two-mode squeezing Hamiltonian, with the pump serving as an external energy reservoir.

\subsection{Small-segment input--output map}
\label{sec:derivation_of_input_output}

Consider a fiber of length \(L\), divided into \(N\) segments of equal length \(\delta z = \frac{L}{N}.\)
Let \(z_i\) denote the center of the \(i\)th segment. Over a short segment of length \( \delta z\) centered at \(z_i\), the parameters may be treated as constant:
\begin{equation}
G(z)\approx G_i,\qquad
\Omega_B(z)\approx \Omega_{B,i},\qquad
\alpha_s(z)\approx \alpha_{s,i}. \nonumber
\end{equation}
The linearized \gls{sbs} interaction on the segment is
\begin{equation}
H_{\mathrm{lin},i}
=
\hbar\left(
G_i e^{\mathrm{i}\Delta k_i z} a_s^\dagger b^\dagger
+
G_i^* e^{-\mathrm{i}\Delta k_i z} a_s b
\right). \nonumber
\end{equation}
For a sufficiently short segment, the phase mismatch satisfies $\Delta k_i\,\delta z \ll 1,$ so the phase can be absorbed into the coupling $\widetilde G_i = G_i e^{\mathrm{i}\Delta k_i z_i}.$ The Hamiltonian then becomes
\begin{equation}
\label{eq:linearH}
H_{\mathrm{lin},i}
=
\hbar\left(
\widetilde G_i a_s^\dagger b^\dagger
+
\widetilde G_i^* a_s b
\right).
\end{equation}
Using the method layed out in \cite{zhu2024optoacousticentanglementcontinuousbrillouinactive, rakich2016quantumtheorycontinuumoptomechanics}, we can then obtain the explicit input--output relation of the Stokes photon mode as follows:%expressed in Proposition \ref{prop:input-ouput}.

\begin{proposition}
\label{prop:input-ouput}
The output Stokes operator satisfies the input--output relation
\begin{equation}
\label{eq:input_output}
a_{s,\mathrm{out}}(\Omega)
=
\mu_i(\Omega)\, a_{s,\mathrm{in}}(\Omega)
+
\sqrt{\eta'_i}\, v_i(\Omega)
+
\nu_i(\Omega)\, b_{\mathrm{in}}^\dagger(\Omega),
\end{equation}
where $\mu_i(\Omega)=\sqrt{\eta_i}+\kappa_i(\Omega)$, $\eta_i=e^{-\alpha_{s,i}\delta z}$, $\kappa_i(\Omega)=|\widetilde G_i|^2\chi_i(\Omega)\,\delta z$, $\nu_i(\Omega)=-\mathrm{i}\widetilde G_i\sqrt{\Gamma}\,\chi_i(\Omega)\,\delta z,$ and
\begin{equation}
\chi_i(\Omega)=\frac{1}{\Gamma/2+\mathrm{i}(\Omega-\Omega_{B,i})}.
\label{eq:chi_def}
\end{equation}
Here, \(\kappa_i(\Omega)\) is the frequency-dependent SBS-induced gain/loss coefficient,  \(\nu_i(\Omega)\) describes the added phonon noise coupled into the Stokes mode, $\chi_i(\Omega)$ is the local Brillouin susceptibility, and $\sqrt{\eta_i}$ describes pure attenuation over a length $\delta z$.
\end{proposition}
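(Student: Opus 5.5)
We derive the relation from the Heisenberg--Langevin equations generated by the linearized Hamiltonian \eqref{eq:linearH}, adding the standard dissipative terms for the optical field (propagation loss $\alpha_{s,i}$) and for the phonon field (damping rate $\Gamma$). We then eliminate the acoustic field adiabatically in the frequency domain, as in \cite{zhu2024optoacousticentanglementcontinuousbrillouinactive, rakich2016quantumtheorycontinuumoptomechanics}. Concretely, we write the spatial propagation equation for the Stokes envelope,
\begin{equation*}
\partial_z a_s = -\tfrac{\alpha_{s,i}}{2} a_s - \mathrm{i}\widetilde G_i\, b^\dagger + \sqrt{\alpha_{s,i}}\, \xi_i ,
\end{equation*}
where $\xi_i$ is the vacuum Langevin noise that preserves the commutators under loss. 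We pair it with the temporal equation for the phonon,
\begin{equation*}
\partial_t b = -\left(\tfrac{\Gamma}{2} + \mathrm{i}\Omega_{B,i}\right) b - \mathrm{i}\widetilde G_i\, a_s^\dagger + \sqrt{\Gamma}\, b_{\mathrm{in}} .
\end{equation*}
The acoustic decay length is very short compared with the optical one, so phonon propagation is neglected and $b$ is treated as a local damped oscillator driven by the Stokes field.

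The key steps, in order, are as follows.

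\textbf{Step 1: solve for the phonon.} We Fourier transform the phonon equation in time, working in the frame rotating with the pump--Stokes beat. This gives $b^\dagger(\Omega)=\chi_i(\Omega)\bigl(\mathrm{i}\widetilde G_i^* a_s(\Omega)+\sqrt{\Gamma}\,b_{\mathrm{in}}^\dagger(\Omega)\bigr)$, with $\chi_i$ as in \eqref{eq:chi_def}. The precise sign of $\Omega-\Omega_{B,i}$ follows from the conjugation convention.

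\textbf{Step 2: substitute into the Stokes equation.} Inserting this into the spatial equation gives a linear ODE in $z$,
\begin{equation*}
\partial_z a_s = \left(-\tfrac{\alpha_{s,i}}{2} + |\widetilde G_i|^2\chi_i\right) a_s - \mathrm{i}\widetilde G_i\sqrt{\Gamma}\,\chi_i\, b_{\mathrm{in}}^\dagger + \sqrt{\alpha_{s,i}}\,\xi_i .
\end{equation*}

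\textbf{Step 3: integrate over the segment.} We integrate across the segment of length $\delta z$. For the loss part we keep the exact solution $e^{-\alpha_{s,i}\delta z/2}=\sqrt{\eta_i}$. The integrated loss noise is collected into a normalized mode $v_i$ with prefactor $\sqrt{1-\eta_i}=\sqrt{\eta'_i}$. For the SBS contributions we expand to first order in $\delta z$. This produces the additive gain term $\kappa_i a_{s,\mathrm{in}}$ and the noise term $\nu_i b_{\mathrm{in}}^\dagger$. Cross terms of order $\alpha_{s,i}|\widetilde G_i|^2\delta z^2$ are dropped. Collecting everything gives exactly \eqref{eq:input_output} with $\mu_i=\sqrt{\eta_i}+\kappa_i$.

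\textbf{Step 4: consistency check.} We verify that the commutator $[a_{s,\mathrm{out}},a_{s,\mathrm{out}}^\dagger]$ equals $|\mu_i|^2+\eta_i'-|\nu_i|^2$. Using $\mathrm{Re}\,\chi_i=\tfrac{\Gamma}{2}|\chi_i|^2$, this equals $1+O(\delta z^2)$. The expansion is therefore consistent: the phonon noise enters with a $b^\dagger$ because the process is amplifying, and it exactly balances the gain.

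The main obstacle is making the adiabatic elimination and the first-order truncation rigorous, in particular justifying that $b_{\mathrm{in}}$ is local to the segment and that the loss noise $v_i$ is a properly normalized independent vacuum mode. We would handle this by stating the approximation $\Gamma\gg$ (optical transit rate across $\delta z$) explicitly. We would then verify bosonic commutation to order $\delta z$, as in Step 4, rather than attempting an exact solution.
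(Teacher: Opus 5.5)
\textbf{There is a genuine gap in your Steps 3--4, and the consistency check in Step 4 is false.} Your Steps 1--2 are the Heisenberg--Langevin and adiabatic-elimination route that the paper only invokes by citation; it writes out no derivation. They correctly give $\partial_z a_s=(-\tfrac{\alpha_{s,i}}{2}+|\widetilde G_i|^2\chi_i)a_s-\mathrm{i}\widetilde G_i\sqrt{\Gamma}\chi_i b_{\mathrm{in}}^\dagger+\dots$.

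With the stated $\nu_i=-\mathrm{i}\widetilde G_i\sqrt{\Gamma}\chi_i\,\delta z$, you get $|\nu_i|^2=|\widetilde G_i|^2\Gamma|\chi_i|^2\,\delta z^2=O(\delta z^2)$. On the other side, using your own identity $\Re\chi_i=\tfrac{\Gamma}{2}|\chi_i|^2$,
\[
|\mu_i|^2+\eta_i'-1=2\sqrt{\eta_i}\,\Re\kappa_i+|\kappa_i|^2=\sqrt{\eta_i}\,|\widetilde G_i|^2\Gamma|\chi_i|^2\,\delta z+O(\delta z^2).
\]
Hence $[a_{s,\mathrm{out}},a_{s,\mathrm{out}}^\dagger]=1+O(\delta z)$, not $1+O(\delta z^2)$. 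The identity yields $2\Re\kappa_i=|\nu_i|^2/\delta z$, not $|\nu_i|^2$. So the gain is \emph{not} balanced by the phonon noise at first order; you are off by a factor of $\delta z$.

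The cause is in Step 3. You model the phonon at each $z$ as a local damped oscillator with its own bath, so the Langevin input $b_{\mathrm{in}}(z,\Omega)$ is delta-correlated in $z$. Its integral over the segment is therefore not $\delta z\, b_{\mathrm{in}}^\dagger$. It is $\sqrt{\delta z}\,\hat B_i^\dagger$, where $\hat B_i=\delta z^{-1/2}\int_{\mathrm{seg}} b_{\mathrm{in}}\,dz$ is a unit-normalized mode. This is the same mechanism that makes your loss-noise prefactor $\sqrt{\eta_i'}\approx\sqrt{\alpha_{s,i}\,\delta z}$ scale like $\sqrt{\delta z}$, which you handled correctly; you treated the two white-noise inputs inconsistently.

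You have two ways to resolve this:
\begin{itemize}
\item Use a unit-normalized bath mode, as the paper implicitly does later with $V_b=(n_{\mathrm{th},i}+\tfrac12)I_2$. Then the derivation gives $\nu_i=-\mathrm{i}\widetilde G_i\sqrt{\Gamma}\chi_i\sqrt{\delta z}$, so $|\nu_i|^2=2\Re\kappa_i$ and the commutator really is $1+O(\delta z^2)$.
\item Keep the stated $\nu_i\propto\delta z$, which is consistent only if $b_{\mathrm{in}}$ is the segment-averaged noise density with $[b_{\mathrm{in}},b_{\mathrm{in}}^\dagger]=1/\delta z$. You must then say so explicitly and carry that factor through Step 4.
\end{itemize}

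As written, your Step 4 ``verifies'' a relation that the check itself refutes. The obstacle to address is this noise normalization, not the $\Gamma\gg$ transit-rate condition you flagged.
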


Therefore, each segment implements a bosonic Gaussian channel, consisting of a deterministic linear transformation, vacuum noise associated with optical attenuation, and thermal noise from the phonon bath. Since evolution is generated by a quadratic Hamiltonian and linear coupling to Gaussian environments, the map is Gaussian and preserves the state's Gaussian character.

\section{Continuous variable formalism}
\label{sec:fiber_cascade}
Continuous-variable (CV) quantum systems are bosonic modes described by canonical quadrature operators rather than finite-dimensional qubits \cite{bookserfini, HolevoBook}. For an $n$-mode system, we define the quadrature vector $\hat{\mathbf R}
=
(\hat x_1,\hat p_1,\ldots,\hat x_n,\hat p_n)^T,$ with canonical commutation relations $[\hat R_j,\hat R_k]= \mathrm{i}\,\tilde \Omega_{jk},$
where $\tilde \Omega$ is the symplectic form. For a single mode with annihilation operator $\hat a$, the quadratures are
\begin{equation}
q=\frac{ a+ a^\dagger}{\sqrt 2},
\qquad
 p=\frac{ a-  a^\dagger}{\mathrm{i}\sqrt 2}.
\end{equation}

A Gaussian state is any state whose Wigner function is Gaussian in phase space. Such a state is completely characterized by its first and second moments. The mean vector is $ d=\langle \hat{R}\rangle$
and the covariance matrix is $V_{jk}
=
\frac12\left\langle
\{\hat R_j-d_j,\hat R_k-d_k\}
\right\rangle.$

We will now show that each fiber segment $i$ under the linearized stimulated Brillouin scattering interaction defines a phase-insensitive single-mode Gaussian channel acting on the Stokes mode. For this, we consider an optical probe mode entering segment \(i\) as modeled by a local Gaussian channel \(\mathcal E_i(z_c)\), whose parameters depend on the correlation-peak position \(z_c\). As derived in Section \ref{sec:derivation_of_input_output}, the input Stokes mode in the segment $i$ evolves as 
\begin{equation}
\begin{aligned}
\label{eq:aout_ain}
a_{s,\mathrm{out}}(\Omega)
&=
\mu_i(\Omega)\,a_{s,\mathrm{in}}(\Omega)
+
\sqrt{\eta'_i}\,v_i(\Omega)
+
\nu_i(\Omega)\,b_{\mathrm{in}}^\dagger(\Omega), \\
a_{s,\mathrm{out}}^\dagger(\Omega)
&=
\mu_i^*(\Omega)\, a_{s,\mathrm{in}}^\dagger(\Omega)
+
\sqrt{\eta'_i}\, v_i^\dagger(\Omega)
+
\nu_i^*(\Omega)\, b_{\mathrm{in}}(\Omega).
\end{aligned}
\end{equation}
where $\mu_i(\Omega)=\sqrt{\eta_i}+\kappa_i(\Omega)$. Therefore, each fiber segment can be modeled as a single-mode Gaussian channel.

\begin{proposition}[Gaussian channel representation of a fiber segment]
Each fiber segment $i$ under the linearized stimulated Brillouin interaction implements a single-mode Gaussian quantum channel acting on the Stokes mode. In quadrature form, the input--output relations are
\begin{align}
d_{\mathrm{out}} &= X_i d_{\mathrm{in}}, \\
V_{\mathrm{out}} &= X_i V_{\mathrm{in}} X_i^T + Y_i,
\end{align}
where the deterministic matrix is
\begin{equation}
X_i(\Omega)=
\begin{pmatrix}
\sqrt{\eta_i}+\Re \kappa_i(\Omega) & -\Im \kappa_i(\Omega) \\
\Im \kappa_i(\Omega) & \sqrt{\eta_i}+\Re \kappa_i(\Omega)
\end{pmatrix},
\end{equation}
and the added-noise matrix is
\begin{equation}
Y_i=
\left[
\left(n_{\mathrm{th},i}+\frac{1}{2}\right)|\nu_i(\Omega)|^2
+
\frac{\eta'_i}{2}
\right] I_2.
\end{equation}
where
\begin{equation}
n_{\mathrm{th},i}
=
\frac{1}{\exp\!\left(\frac{\hbar \Omega_{B,i}}{k_B T_m}\right) - 1}
\end{equation}
denotes the thermal photon occupation of the acoustic field at temperature $T_m$.

\end{proposition}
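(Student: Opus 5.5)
The plan is to start from the operator input--output relation of Proposition~\ref{prop:input-ouput}, in the form \eqref{eq:aout_ain}, and convert it into quadrature form using $q=(a+a^\dagger)/\sqrt2$ and $p=(a-a^\dagger)/(\mathrm{i}\sqrt2)$. We fix a frequency $\Omega$ and treat $a_{s,\mathrm{in}}$, $v_i$ and $b_{\mathrm{in}}$ as three independent single-mode bosonic operators. The vacuum mode $v_i$ has zero mean and covariance $\tfrac12 I_2$. The phonon input $b_{\mathrm{in}}$ is thermal with zero mean and covariance $(n_{\mathrm{th},i}+\tfrac12)I_2$, with $n_{\mathrm{th},i}$ given by the Bose--Einstein occupation at $\Omega_{B,i}$ and $T_m$. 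Since the map is linear in these operators and the environments are Gaussian, the output is Gaussian. Hence the first and second moments determine the channel completely.

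\textbf{First moments.} Write $\mu_i=\sqrt{\eta_i}+\kappa_i$ with $\kappa_i=\Re\kappa_i+\mathrm{i}\Im\kappa_i$, and insert into $q_{\mathrm{out}}=(a_{\mathrm{out}}+a_{\mathrm{out}}^\dagger)/\sqrt2$. A short computation gives
\[
q_{\mathrm{out}}=(\sqrt{\eta_i}+\Re\kappa_i)\,q_{\mathrm{in}}-\Im\kappa_i\,p_{\mathrm{in}}+(\text{noise}),
\]
\[
p_{\mathrm{out}}=\Im\kappa_i\,q_{\mathrm{in}}+(\sqrt{\eta_i}+\Re\kappa_i)\,p_{\mathrm{in}}+(\text{noise}).
\]
This is multiplication by the complex number $\mu_i$, written as a real $2\times2$ rotation-scaling matrix, so it yields $X_i$. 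The noise terms have zero mean, which gives $d_{\mathrm{out}}=X_id_{\mathrm{in}}$.

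\textbf{Second moments.} Because the three inputs are uncorrelated, $V_{\mathrm{out}}$ splits into three contributions.
\begin{itemize}
\item The signal term is $X_iV_{\mathrm{in}}X_i^T$.
\item The vacuum term $\sqrt{\eta_i'}\,v_i$ contributes $\eta_i'\cdot\tfrac12 I_2$.
\item The term $\nu_i b_{\mathrm{in}}^\dagger$ enters through the conjugate mode. Its quadratures are $(q_b,-p_b)$ transformed by the rotation-scaling matrix of $\nu_i$. This is a phase-conjugating map $\sqrt{|\nu_i|^2}\,R\,Z$, where $R$ is a rotation and $Z=\mathrm{diag}(1,-1)$. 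Applied to the isotropic thermal covariance $(n_{\mathrm{th},i}+\tfrac12)I_2$, it gives $|\nu_i|^2(n_{\mathrm{th},i}+\tfrac12)I_2$.
\end{itemize}
Summing the last two contributions gives $Y_i$. Phase insensitivity follows because $X_i$ is a scaled rotation and $Y_i$ is proportional to the identity.

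\textbf{Main obstacle.} The step that needs the most care is making sure $\mathcal E_i$ is a legitimate CPTP Gaussian channel, not just a formal moment map. The condition to check is $Y_i+\tfrac{\mathrm{i}}{2}(\tilde\Omega-X_i\tilde\Omega X_i^T)\ge0$, which for phase-insensitive channels reads $Y_i\ge\tfrac12\,\big||\mu_i|^2-1\big|\,I_2$.

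This condition is equivalent to preservation of the commutator, $|\mu_i|^2+\eta_i'-|\nu_i|^2=1$, which the exact input--output relation guarantees. In the $\delta z$ expansion it holds only to first order, because $|\nu_i|^2=|\widetilde G_i|^2\Gamma|\chi_i|^2\delta z^2$ as written. I would verify it by expanding in $\delta z$ and using $2\Re\chi_i=\Gamma|\chi_i|^2$. Any mismatch at higher order would be absorbed by interpreting the segment map to leading order in $\delta z$, consistent with the approximations used in Proposition~\ref{prop:input-ouput}.
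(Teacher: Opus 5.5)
Your moment computation is essentially the paper's proof. The paper also writes $R_{\mathrm{out}}=X_iR_{\mathrm{in}}+Z_iR_b+\sqrt{\eta_i'}R_v$, with $X_i$ the rotation--scaling matrix of $\mu_i$ and $Z_i$ the phase-conjugating matrix of $\nu_i$ (your $|\nu_i|RZ$). It then uses independence of the three inputs and $Z_iZ_i^T=|\nu_i|^2I_2$ to obtain $Y_i$. The paper never checks complete positivity; it just reads the moments off the operator relation of Proposition~\ref{prop:input-ouput}. The problem is confined to the extra verification in your last paragraph, which would not go through as you describe it.

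\textbf{Commutator preservation versus CP.} The identity $|\mu_i|^2+\eta_i'-|\nu_i|^2=1$ is sufficient for $Y_i\ge\tfrac12\bigl||\mu_i|^2-1\bigr|I_2$ (for every $n_{\mathrm{th},i}\ge0$), not equivalent to it.

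\textbf{It fails already at first order.} With the stated coefficients, and using $2\Re\chi_i=\Gamma|\chi_i|^2$,
\[
|\mu_i|^2+\eta_i'-|\nu_i|^2=1+2\sqrt{\eta_i}\,\Re\kappa_i+|\kappa_i|^2-|\nu_i|^2=1+|\widetilde G_i|^2\Gamma|\chi_i|^2\,\delta z+O(\delta z^2).
\]
The $O(\delta z)$ gain term has nothing to cancel it, because $|\nu_i|^2=O(\delta z^2)$. Your identity would close the balance only if $|\nu_i|^2\approx|\widetilde G_i|^2\Gamma|\chi_i|^2\,\delta z$, i.e.\ $\nu_i\propto\sqrt{\delta z}$.

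\textbf{The mismatch cannot be absorbed at higher order.} To leading order, $Y_i\approx\tfrac12\alpha_{s,i}\delta z\,I_2$ and $|\mu_i|^2-1\approx(|\widetilde G_i|^2\Gamma|\chi_i|^2-\alpha_{s,i})\,\delta z$. So the CP inequality holds only when $|\widetilde G_i|^2\Gamma|\chi_i|^2\le 2\alpha_{s,i}$. It is violated in the net-gain regime that SBS is meant to produce. Because the fiber is a cascade of $N=L/\delta z$ segments, an $O(\delta z)$ violation per segment accumulates to an $O(1)$ one. For $\alpha_{s,i}=0$ the cascade tends to a noiseless phase-insensitive amplifier.

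\textbf{How to fix it.} You have two options. One is to confine the claim, as the paper does, to the moment map induced by the given operator relation. The other is to replace the phonon-noise weight by $|\nu_i|^2=2\sqrt{\eta_i}\Re\kappa_i+|\kappa_i|^2$. That restores commutator preservation and hence complete positivity, but it changes $Y_i$ from the stated formula.
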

This shows that $\sqrt{\eta_i}$ represents attenuation, $\Re \kappa_i(\Omega)$ gives \gls{sbs} gain/loss, $\Im \kappa_i(\Omega)$ produces phase rotation, and $Y_i$ contains both vacuum noise and thermal Brillouin noise.
\begin{IEEEproof}
Starting from the input--output relation
\begin{equation}
\hat a_{s,\mathrm{out}} = \mu_i \hat a_{s,\mathrm{in}} + \sqrt{\eta'_i}\,\hat v_i + \nu_i \hat b_i^\dagger,
\end{equation}
we express the operators in terms of quadratures using
\[
q = \frac{a + a^\dagger}{\sqrt{2}}, \qquad
p = \frac{a - a^\dagger}{\mathrm{i}\sqrt{2}}.
\]
Writing $\mu_i = \mu_{R,i} + \mathrm{i}\mu_{I,i}$ and $\nu_i = \nu_{R,i} + \mathrm{i}\nu_{I,i}$, the transformation becomes linear in the quadrature operators:
\begin{equation}
R_{\mathrm{out}} = X_i R_{\mathrm{in}} + Z_i R_b + \sqrt{\eta'_i}\, R_v,
\end{equation}
where $R=(q,p)^T$, and $Z_i$ is the corresponding phonon-coupling matrix. Explicitly

\begin{equation}
X_i=
\begin{pmatrix}
\Re \mu_i & -\Im \mu_i \\
\Im \mu_i & \Re \mu_i
\end{pmatrix},
\quad
Z_i=
\begin{pmatrix}
\Re \nu_i & \Im \nu_i \\
\Im \nu_i & -\Re \nu_i
\end{pmatrix}.
\label{eq:XiZi}
\end{equation}

Taking expectation values yields $d_{\mathrm{out}} = X_i d_{\mathrm{in}}$, since the bath modes satisfy $d_b = d_v = 0$.

For the covariance matrix, using the independence of input, vacuum, and phonon modes,
\begin{equation}
V_{\mathrm{out}} = X_i V_{\mathrm{in}} X_i^T + Z_i V_b Z_i^T + \eta'_i\cdot V_v.
\end{equation}
For a thermal phonon bath and vacuum optical bath,
\begin{equation}
V_b = \left(n_{\mathrm{th},i}+\frac{1}{2}\right) I_2,
\qquad
V_v = \frac{1}{2} I_2.
\end{equation}
 Substituting these gives
\begin{equation}
\begin{aligned}
Y_i &= Z_i V_b Z_i^T + \eta'_i\cdot V_v \nonumber \\
&=
\left[
\left(n_{\mathrm{th},i}+\tfrac{1}{2}\right)|\nu_i|^2
+
\tfrac{\eta'_i}{2}
\right] I_2.
\end{aligned}
\end{equation}
 Thus, each segment acts as a rotation-scaling transformation followed by isotropic Gaussian noise, i.e.\ a phase-insensitive Gaussian channel.
\end{IEEEproof}

\subsection{Fiber as a cascade of quantum channels}

The entire fiber is therefore described by a cascade of quantum channels,
\begin{equation}
\mathcal E_{\mathrm{fiber}}
=
\mathcal E_L
\circ
\mathcal E_{L-1}
\circ
\cdots
\circ
\mathcal E_1.
\label{eq:fiber_cascade}
\end{equation} 
The cascade of two Gaussian channels with parameters \((X_1,Y_1)\) and \((X_2,Y_2)\) is again Gaussian, with total deterministic matrix $X_{21}=X_2X_1$, and the noise matrix $Y_{21}=X_2Y_1X_2^T+Y_2.$ 

Hence, the entire fiber, modeled as a cascade of \(L\) local Gaussian channels, has total parameters
\begin{equation}
\begin{aligned}
\label{eqn:fiber-channel}
X_{\mathrm{tot}} &=X_LX_{L-1}\cdots X_1, \\
Y_{\mathrm{tot}}
&=
\sum_{k=1}^L
\left(
X_LX_{L-1}\cdots X_{k+1}
\right)
Y_k
\left(
X_LX_{L-1}\cdots X_{k+1}
\right)^T.
\end{aligned}
\end{equation}

Using the fact that each \(X_i\) has the rotation-scaling form and each \(Y_k\) is proportional to \(\mathbb I_2\), the explicit form of the parameters $X_{\mathrm{tot}}$ and $Y_{\mathrm{tot}}$  are as follows:

\begin{equation}
  X_{\mathrm{tot}}=
\begin{pmatrix}
\Re \mu_{\mathrm{tot}} & -\Im \mu_{\mathrm{tot}}\\
\Im \mu_{\mathrm{tot}} & \Re \mu_{\mathrm{tot}}
\end{pmatrix},
\qquad
Y_{\mathrm{tot}}=y_{\mathrm{tot}}\mathbb I_2
\end{equation}
where parameters $\mu_{\mathrm{tot}}$ and $ y_{\mathrm{tot}}$ given by
\begin{equation}
    \begin{aligned}
        \mu_{\mathrm{tot}}&=\prod_{k=1}^L \mu_k \\
        y_{\mathrm{tot}} &=\sum_{k=1}^N
\left(
\prod_{j=k+1}^N |\mu_j|^2
\right)
\left[
\tfrac{\eta'_k}{2}
+
\left(n_{\mathrm{th},k}+\tfrac12\right)|\nu_k|^2
\right].
    \end{aligned}
\end{equation}

\section{Information-Theoretic Analysis}
\label{sec:information_theoretic_analysis}

Given the channel \eqref{eqn:fiber-channel}, we formulate the task of detecting whether, at a given segment $i$, the channel parameters deviate from their nominal values due to an eavesdropping attack. The input states are coherent states, which are transformed by the Gaussian channel $(X,Y)$ into displaced thermal states.

\subsection{Channel Model and Attack Description}

We assume uniform optical loss \(\eta_k = \eta\) for all segments and a coherent input probe state with amplitude \(\sqrt{E}\). In the absence of pump interaction, $\widetilde G_k = 0$ for all $k$, so that \(\mu_k = \sqrt{\eta}\) and no \gls{sbs} noise is generated. The resulting channel is a pure-loss Gaussian channel with output state \(S_O(\gamma)\), where
$O = \frac{1 - \eta^L}{2}$ and $\gamma = \eta^{L/2} \sqrt{E}$.

We consider network operators aiming to ensure physical-layer security by detecting any eavesdropper with high probability, while allowing occasional false alarms. This motivates an asymmetric hypothesis-testing formulation. Assuming $k$ independent probe events and ideal initial conditions across all segments $i=1,\ldots,L$, all segments can be monitored in parallel. The detection task at segment $i$ is therefore

\begin{align}
\min_{0 \le P \le \mathbb{I}} 
\left\{
\operatorname{Tr}\bigl(P S_N(\beta)^{\otimes k}\bigr)
:
\operatorname{Tr}\bigl((\mathbb{I}-P) S_M(\alpha)^{\otimes k}\bigr) \ge \epsilon'
\right\},
\end{align}

where $(\alpha, M)$ and $(\beta, N)$ describe the clean and disturbed states, respectively.

\subsubsection{Clean Channel}

When the pump overlaps segment $i$, \gls{sbs} interaction is activated locally. The effective gain coefficient becomes $\mu_{\mathrm{clean}}^{(i)} = \eta^{(L-1)/2} (\sqrt{\eta} + \kappa_i).$
The output state is $S_M(\alpha)$ with parameters
\begin{align}
\label{eq:clean_parameters}
M &= \frac{1 - \eta^L}{2}
+ \eta^{L-i} \left(n_{\mathrm{th}} + \tfrac{1}{2}\right) |\nu_i|^2, \\
\alpha &= \eta^{(L-1)/2} (\sqrt{\eta} + \kappa_i) \sqrt{E}.
\end{align}

\subsubsection{Disturbed Channel}

An evanescent-coupling attack is modeled as a beam-splitter interaction at segment $i$ with transmissivity $\tau_E$. The output state is $S_N(\beta)$ with
\begin{align}
\label{eq:disturbed_parameters}
N &= \tfrac{1 - \eta^L}{2}
+ \eta^{L-i}
\left[
\tau_E \left(n_{\mathrm{th}} + \tfrac{1}{2}\right) |\nu_i|^2
+ \tau_E'\left(n_E + \tfrac{1}{2}\right)
\right], \\
\beta &= \eta^{(L-1)/2} \sqrt{\tau_E} (\sqrt{\eta} + \kappa_i) \sqrt{E}.
\end{align}

Here, $1-\tau_E$ is the extracted power fraction. Passive attacks ($n_E=0$) introduce only vacuum noise, while active attacks ($n_E>0$) introduce excess noise.

\subsection{Relative-Entropy Analysis}

The relative entropy between displaced thermal Gaussian states  can be calculated from \cite[Lemma 3]{parthasarathy2021pedagogicalnotecomputationrelative} and \cite{HolevoBook} as 
\begin{align}
D(S_N(\beta)\|S_M(\alpha)) 
&= -g(N) + \ln(M+1) \nonumber \\
&\quad + \ln\!\left(\tfrac{M+1}{M}\right)\left(N + |\beta - \alpha|^2\right),
\label{eqn:relative-entropy-formula}
\end{align}
where $g(x) = (x+1)\ln(x+1) - x\ln x$. Substituting the channel parameters yields
\begin{align}
D
&= -g(N) + \ln(M+1) \nonumber \\
&\quad + \ln\!\left(\tfrac{M+1}{M}\right)
\left[
N + \eta^{L-1} |\sqrt{\tau_E} - 1|^2
|\sqrt{\eta} + \kappa_i|^2 E
\right].
\label{eq:quantum_Stein}
\end{align}
Before we perform the numerical analysis, let us consider a weak attack approximation, which yields rough analytical results. In this limit, $\tau_E = 1-\rho$ such that $\rho \ll 1$ and $\sqrt{\tau_E} \approx 1 - \frac{\rho}{2}$. Expanding the relative entropy to the leading order in $\rho$, one finds
\begin{align}
D \equiv & D\!\left(S_N(\beta)\,\|\,S_M(\alpha)\right)
= 
\ln\!\left(\frac{M+1}{M}\right)\,|\beta-\alpha|^2  \\
&=\frac{\rho^2}{4}\,
\ln\!\left(\frac{M+1}{M}\right)\,
\eta^{L-1}|\sqrt{\eta}+\kappa_i|^2 E. \nonumber
\end{align}

Hence, after $k$ independent probes, using the quantum Stein lemma, $D_k = kD.$ By the quantum Stein lemma, the missed-detection probability decays as $P_{\mathrm{miss}} \sim e^{-kD}.$ Requiring $P_{\mathrm{miss}} \lesssim p$ yields the
detectability condition $kD \gtrsim \ln\!\frac{1}{p}.$ Solving for $\rho$, we obtain that the minimum detectable tap strength is of the order
\begin{equation}
\rho_{\min}
\approx
2\sqrt{
\frac{\ln(1/p)}
{k\,\ln\!\left(\frac{M+1}{M}\right)\,\eta^{L-1}\,|\sqrt{\eta}+\kappa_i|^2\,E}
}.
\end{equation}
Thus, the weakest detectable evanescent-coupling attack scales as
$\rho_{\min}\sim (kE)^{-1/2}$, with additional suppression from
fiber loss and enhancement from the local \gls{sbs} response.

Similarly, the security condition  $2^{-k D(S_M(\alpha)\|S_N(\beta))} \le \lambda$ is equivalent to $k D \gtrsim \ln\!\frac{1}{\lambda}.$ This yields the minimum number of probes as
\begin{equation}
k \gtrsim
\frac{4\,\ln(1/\lambda)}
{\rho^2\,\ln\!\left(\frac{M+1}{M}\right)\,
\eta^{L-1}|\sqrt{\eta}+\kappa_i|^2 E}.
\end{equation}
Suppose  $t_L$ is the time a pulse takes to run through a fiber of length $L$. The total scan time is given by \(T_{\mathrm{scan}} = t_L \cdot k\). This number provides the time window that the eavesdropper can use at best, and therefore quantifies the maximum amount of information he could get before data transmission stops, as $ B_{stolen} (\rho) = T_{scan}\cdot C (\rho)$ where $C$ is the data transmission capacity of the WDM system in case that it is run in parallel with the test (Brillouin scattering) system. Hence,
\begin{equation}
B_{\mathrm{stolen}}(\rho)
\approx
C\,t_L\,
\frac{4\ln(1/\lambda)}
{\rho^2\,
\ln\!\left(\frac{M+1}{M}\right)\,
\eta^{L-1}\,
|\sqrt{\eta}+\kappa_i|^2\,E},
\end{equation}
which shows the characteristic scaling $B_{\mathrm{stolen}} \propto \rho^{-2}.$

\subsubsection{Numerical analysis}
In the numerical analysis part, in addition to the quantum Stein exponent $D \equiv D(S_N(\beta)\|S_M(\alpha))$ obtained in Eq. \eqref{eq:quantum_Stein}, we evaluate two other receiver-constrained exponents.

The first is photon counting with a threshold decision rule based on the total detected photon number, where the exponent $D_P$ is given by
\begin{align}
D_P &=  \sup_{s\le 0}\Big[ sM + \ln\!\bigl(N+1 - N e^{s}\bigr)
%\nonumber\\
%&\qquad
+ \Delta\,
\tfrac{1 - e^{s}}{N+1 - N e^{s}}
\Big].
\end{align}
In our model,
\begin{equation}
\Delta = |\beta-\alpha|^2
=
\eta^{L-1}|\sqrt{\tau_E}-1|^2\,|\sqrt{\eta}+\kappa_i|^2\,E,
\end{equation}
so that $D_P$ is fully determined by the channel parameters $M,N$ and the displacement $\Delta$.

The second is heterodyne detection, where we define $D_H$ as the classical relative entropy between the corresponding heterodyne outcome distributions \cite[Chapter 1]{pardo2005statistical}, obtained as follows:

\begin{equation}
D_H
\equiv
\int_{\mathbb C} d^2 z \, q(z|N,\beta)\,
\ln\!\tfrac{q(z|N,\beta)}{q(z|M,\alpha)},
\end{equation}
where $q(z|\bar n,\gamma)
=
\frac{1}{\pi(\bar n+1)}
\exp\!\left(-\tfrac{|z-\gamma|^2}{\bar n+1}\right).$ Evaluating the Gaussian integral gives
\begin{equation}
D_H
=
\ln\!\left(\tfrac{M+1}{N+1}\right)
+\tfrac{N+1}{M+1}-1
+\tfrac{|\beta-\alpha|^2}{M+1}.
\end{equation}

\begin{figure}[h]
    \centering
    \includegraphics[width=0.9\linewidth]{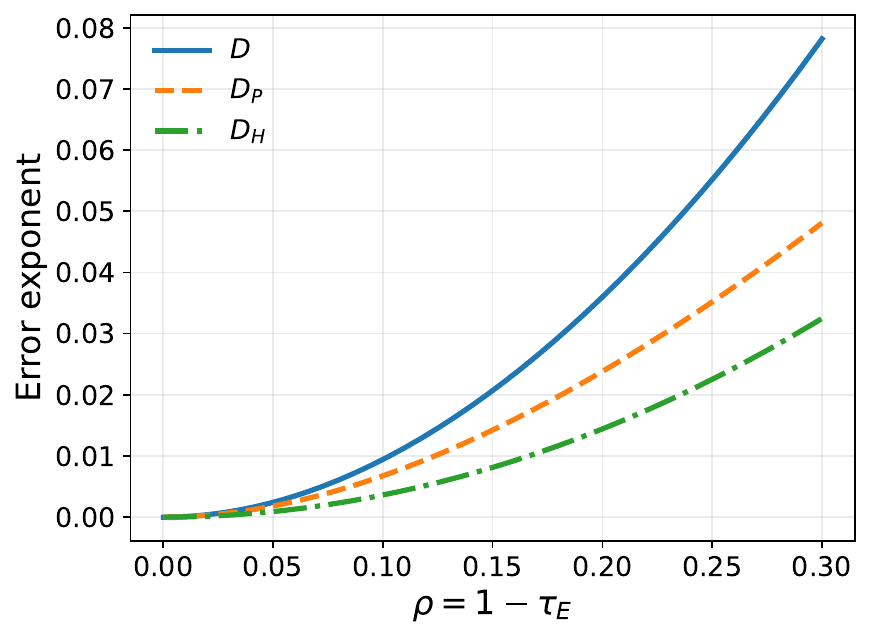}
    \caption{Comparison of the quantum Stein exponent $D$, photon number threshold detection exponent $D_P$, and heterodyne exponent $D_H$ for the clean and disturbed channel model as a function of the attack strength $\rho = 1-\tau_E$. Further parameters: $\eta = 0.98$, $L = 100$, attacked segment index: $50$, $E=5.0$, $n_{\mathrm{th}}=1.0$, $n_E=0.5$.}
    \label{fig:exponent_vs_attack_sstrength}
\end{figure}

\begin{figure}[h]
    \centering
    \includegraphics[width=0.9\linewidth]{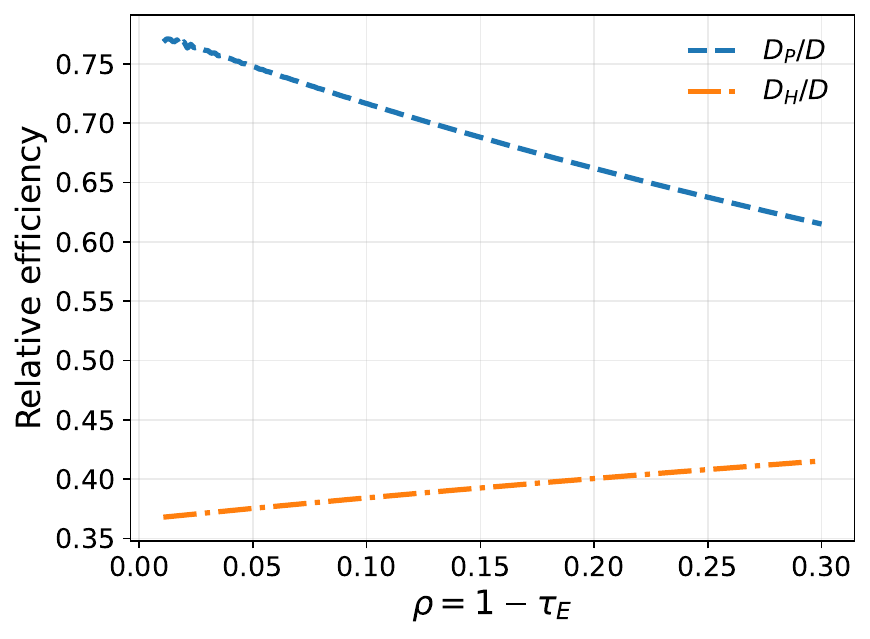}
   \caption{
Relative efficiency of photon-number threshold detection and heterodyne detection compared to the optimal quantum receiver, shown as $D_P/D$ and $D_H/D$ as a function of the attack strength $\rho = 1-\tau_E$. Further parameters: $\eta = 0.98$, $L = 100$, attacked segment index: $50$, probe energy $E=5.0$, $n_{\mathrm{th}}=1.0$, $n_E=0.5$.
}
    \label{fig:exponent_ratios}
\end{figure}

All three error exponents as a function of attack strength are plotted in Fig. \ref{fig:exponent_vs_attack_sstrength} and their corresponding ratios in Fig. \ref{fig:exponent_ratios}. As shown in Fig.~\ref{fig:exponent_vs_attack_sstrength}, all three error exponents increase monotonically with $\rho$, reflecting the increasing distinguishability between the clean and disturbed channels as the tapping strength increases. The quantum Stein exponent $D$ provides the optimal performance, while the receiver-constrained exponents satisfy $D_P > D_H$ across the entire range, indicating that photon-number threshold detection is more effective than heterodyne detection at capturing the relevant statistical differences induced by the attack. This behavior suggests that the disturbance is not purely a coherent displacement, but also modifies photon-number statistics in a way that favors photon-counting strategies. Figure~\ref{fig:exponent_ratios} complements this by quantifying the performance relative to the optimal limit: the photon threshold detector achieves approximately $60$--$80\%$ of the optimal exponent, whereas heterodyne detection remains limited to about $35$--$40\%$.

\subsection{Quantum Fisher Information via Quantum Sensing}%and Quantum-Enhanced Probing}

In addition to the hypothesis-testing formulation above, we can also analyze the intrusion-detection problem from a local
parameter-estimation perspective. We treat the attack strength
\(\rho = 1 - \tau_E\) as an unknown scalar parameter encoded in the
Gaussian output state and quantify its estimability via the quantum
Fisher information (QFI). 

For the coherent probe considered above, the output state is $ S_{N(\rho)}\bigl(\beta(\rho)\bigr)$ from Eq. \ref{eq:disturbed_parameters}. The covariance matrix for $ S_{N(\rho)}\bigl(\beta(\rho)\bigr)$ is
isotropic, $\Sigma(\rho)
=
\left(N(\rho)+\frac{1}{2}\right) I_2,$ and the displacement vector is $\Delta X(\rho)
=
\sqrt{2}
\begin{pmatrix}
\mathrm{Re}\,\beta(\rho)\\
\mathrm{Im}\,\beta(\rho)
\end{pmatrix}.$ The purity is $P(\rho)
=
\frac{1}{2N(\rho)+1}.$

For $k$ independent probe events, the quantum Cramér--Rao bound yields $    \mathrm{Var}(\hat{\rho})
    \geq
    \frac{1}{k F_Q(\rho)}.$
where, $F_Q(\rho)$ is the QFI. The QFI for a Gaussian state is  \cite{Pinel_2013, monras2013phasespaceformalismquantum}

\begin{equation}
\label{eq:general_form}
F_Q(\rho)
=
\frac12
\frac{\mathrm{tr}\!\left[(\Sigma^{-1}\dot{\Sigma})^2\right]}{1+P^2}
+
2\frac{\dot P^2}{1-P^4}
+
\Delta \dot X^T \Sigma^{-1} \Delta \dot X
\end{equation}
where $\dot{(\cdot)} \equiv \partial_\rho (\cdot)$. 
For a displaced thermal state, the QFI is then
\begin{equation}
    F_Q^{\rm coh}(\rho)
    =
    \frac{
    4|\partial_\rho \beta(\rho)|^2
    }{
    2N(\rho)+1
    }
    +
    \frac{
    [\partial_\rho N(\rho)]^2
    }{
    N(\rho)[N(\rho)+1]
    } .
\end{equation}

Similarly, defining the centered quadrature operator
\begin{equation}
    \Delta \hat X(\rho)
    =
    \hat X-\Delta X(\rho),
    \qquad
    \hat X=
    \begin{pmatrix}
        \hat q\\
        \hat p
    \end{pmatrix}.
\end{equation}
and let $ \hat b=\hat a-\beta(\rho)$ to denote thermal state in the displaced frame, we obtain symmetric logarithmic derivative (SLD) as:
\begin{align}
    \mathcal L_\rho
    & = \mathcal L_{\rm disp}
    +
    \mathcal L_{\rm noise}  \\ &=
    \Delta \dot X^{T}\Sigma^{-1}\Delta \hat X
    +
    \tfrac{\partial_\rho N}{N(N+1)}
    \left[
    (\hat a-\beta)^\dagger(\hat a-\beta)-N
    \right].\nonumber
\label{eq:sld_displaced_thermal}
\end{align}
where the first term is linear in quadrature, while the second is quadratic in field operators, i.e., number-like. The quantum Cramér--Rao bound is saturated by measurements in the
eigenbasis of the symmetric logarithmic derivative (SLD). Therefore, the
optimal measurement is given by the spectral decomposition of
\(\mathcal L_\rho\). The optimal POVM is
\begin{equation}
    \Pi_n
    =
    D\!\left(-\tfrac{A}{\lambda}\right)
    |n\rangle\langle n|
    D^\dagger\!\left(-\tfrac{A}{\lambda}\right),
\end{equation}
where
\begin{equation}
    A
    =
    \frac{2\partial_\rho\beta(\rho)}{2N(\rho)+1},
    \qquad
    \lambda
    =
    2\frac{\partial_\rho N(\rho)}{N(\rho)[N(\rho)+1]}.
\end{equation}
In Figure \ref{fig:fisher_information}, we have plotted the estimation variance as a function of the attack strength for different measurement strategies for the weak signal regime. 
\begin{figure}[h]
    \centering
    \includegraphics[width=\linewidth]{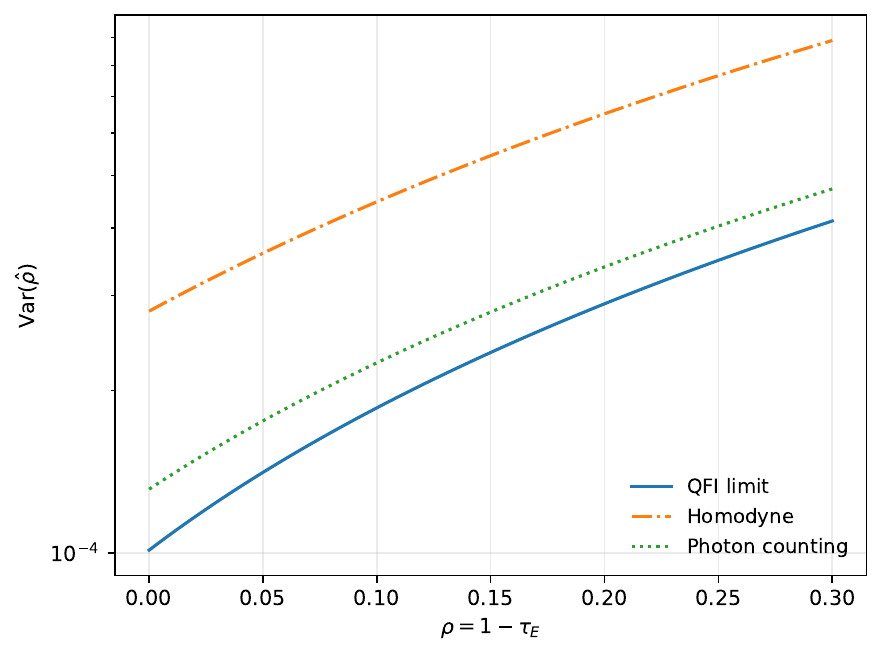}
    \caption{
Variance of the estimator $\hat{\rho}$ for various measurement methods as a function of the attack strength $\rho = 1 - \tau_E$. Further parameters: $\eta = 0.98$, $L= 100$, attacked segment index: $50$, $E = 0.5$, $n_{\mathrm{th}} = 1.0$, $n_E = 2.0$.
}
    \label{fig:fisher_information}
\end{figure}

It is crucial to note that the optimal POVM depends on the parameter $\rho$, which we are trying to estimate. To tackle this issue, one can use the Quantum Local Asymptotic Normality (QLAN) approach \cite{Yamagata_2013}, which implements an adaptive, near-optimal measurement that asymptotically saturates the QFI.

\section{Conclusion}

In this work, we modeled the fiber as a cascade of Gaussian channels, enabling us to develop a quantum information-theoretic framework for detecting eavesdropping in optical fibers via stimulated Brillouin scattering (SBS). Our results show that the detectability of weak tapping attacks scales with probe energy and the number of probing events, while being suppressed by propagation loss, leading to a characteristic inverse square-root scaling of the minimum detectable attack strength. In addition to the hypothesis-testing formulation, we also analyzed the intrusion-detection problem from a local parameter-estimation perspective where the attack strength is an unknown scalar parameter.

Further work is needed to sharpen the information-theoretic definition of the task and to assess the suitability of different probe signals. Possible methods include quickest change point detection \cite{Fanizza_2023}, compound hypothesis testing approaches \cite{Lami_2025} or multi-parameter metrology settings, where displacement and noise are treated as independent parameters. The latter would require the use of the Holevo Cramér–Rao bound instead of the QFI.

\bibliographystyle{IEEEtran}
\bibliography{bib}

%\newpage
%\section{Appendix}
%\input{Extras/displacedPhotonCounterOptimality}
\end{document}